\newlength{\Lpr}
\newsavebox{\Bpr}
\newcommand{\V}[1]{\mbox{\boldmath$\mathbf{#1}$\unboldmath}}
\newcommand{\bdm}{\begin{displaymath}}
\newcommand{\edm}{\end{displaymath}}
\newcommand{\be}[1]{\begin{equation} \label{#1}}
\newcommand{\ee}{\end{equation}}
\newcommand{\bae}[3]{
\begin{equation} \label{#1}
\renewcommand{\arraystretch}{#2}
\begin{array}{#3}}
\newcommand{\eae}{\end{array}\end{equation}}
\newcommand{\baen}[2]{
\begin{displaymath} 
\renewcommand{\arraystretch}{#1}
\begin{array}{#2}}
\newcommand{\eaen}{\end{array}\end{displaymath}}
\newcommand{\DefLetter}[4]{
\newcommand{#1}{\ensuremath{\V{#2}}} 
\newcommand{#3}{\ensuremath{\V{#4}}} 
}
\DefLetter{\vzer}{0}{\mzer}{0}
\DefLetter{\vone}{1}{\mone}{1}
\DefLetter{\va}{a}{\ma}{A}
\DefLetter{\vb}{b}{\mb}{B}
\DefLetter{\vc}{c}{\mc}{C}
\DefLetter{\vd}{d}{\md}{D}
\DefLetter{\ve}{e}{\me}{E}
\DefLetter{\vf}{f}{\mf}{F}
\DefLetter{\vg}{g}{\mg}{G}
\DefLetter{\vh}{h}{\mh}{H}
\DefLetter{\vi}{i}{\mi}{I}
\DefLetter{\vj}{j}{\mj}{J}
\DefLetter{\vk}{k}{\mk}{K}
\DefLetter{\vl}{l}{\ml}{L}
\DefLetter{\vm}{m}{\mm}{M}
\DefLetter{\vn}{n}{\mn}{N}
\DefLetter{\vpr}{p}{\mpr}{P}
\DefLetter{\vq}{q}{\mq}{Q}
\DefLetter{\vr}{r}{\mr}{R}
\DefLetter{\vs}{s}{\ms}{S}
\DefLetter{\vt}{t}{\mt}{T}
\DefLetter{\vur}{u}{\mur}{U}
\DefLetter{\vv}{v}{\mv}{V}
\DefLetter{\vw}{w}{\mw}{W}
\DefLetter{\vx}{x}{\mx}{X}
\DefLetter{\vy}{y}{\my}{Y}
\DefLetter{\vz}{z}{\mz}{Z}
\DefLetter{\vdel}{\delta}{\mdel}{\Delta}
\DefLetter{\vphi}{\phi}{\mphi}{\Phi}
\DefLetter{\vpsi}{\psi}{\mpsi}{\Psi}
\DefLetter{\vrho}{\rho}{\mrho}{\Lambda}
\DefLetter{\vxi}{\xi}{\mxi}{\Xi}
\DefLetter{\valpha}{\alpha}{\malpha}{\Alpha}
\DefLetter{\vbeta}{\beta}{\mbeta}{\Beta}
\DefLetter{\vlam}{\lambda}{\mlam}{\Lambda}
\DefLetter{\vsig}{\sigma}{\msig}{\Sigma}
\DefLetter{\vtau}{\tau}{\mtau}{\tau}
\DefLetter{\vtheta}{\theta}{\mtheta}{\Theta}
\DefLetter{\vome}{\omega}{\mome}{\Omega}
\DefLetter{\vzero}{0}{\mzero}{0}
\DefLetter{\vgam}{\gamma}{\mgam}{\Gamma}
\DefLetter{\veps}{\epsilon}{\meps}{\Epsilon}
\DefLetter{\veta}{\eta}{\meta}{\Eta}
\newcommand{\DefFuncLetter}[2]{
\newcommand{#1}{\ensuremath{{\mathrm{#2}}}} 
}
\DefFuncLetter{\Fzer}{0}
\DefFuncLetter{\Fa}{a}
\DefFuncLetter{\FA}{A}
\DefFuncLetter{\Fb}{b}
\DefFuncLetter{\Fc}{c}
\DefFuncLetter{\FC}{C}
\DefFuncLetter{\Fd}{d}
\DefFuncLetter{\Fe}{e}
\DefFuncLetter{\Ff}{f}
\DefFuncLetter{\Fg}{g}
\DefFuncLetter{\FG}{G}
\DefFuncLetter{\Fh}{h}
\DefFuncLetter{\FH}{H}
\DefFuncLetter{\Fi}{i}
\DefFuncLetter{\Fk}{k}
\DefFuncLetter{\Fl}{l}
\DefFuncLetter{\FL}{L}
\DefFuncLetter{\Fm}{m}
\DefFuncLetter{\Fn}{n}
\DefFuncLetter{\Fnr}{n}
\DefFuncLetter{\FN}{N}
\DefFuncLetter{\Fo}{o}
\DefFuncLetter{\FO}{O}
\DefFuncLetter{\Fpr}{p}
\DefFuncLetter{\FPr}{P}
\DefFuncLetter{\Fq}{q}
\DefFuncLetter{\Fr}{r}
\DefFuncLetter{\Fs}{s}
\DefFuncLetter{\FS}{S}
\DefFuncLetter{\Ft}{t}
\DefFuncLetter{\FT}{T}
\DefFuncLetter{\Fu}{u}
\DefFuncLetter{\FU}{U}
\DefFuncLetter{\Fv}{v}
\DefFuncLetter{\Fw}{w}
\DefFuncLetter{\FW}{W}
\DefFuncLetter{\Fx}{x}
\DefFuncLetter{\Fy}{y}
\DefFuncLetter{\FY}{Y}
\DefFuncLetter{\Fz}{z}
\DefFuncLetter{\FZ}{Z}
\DefFuncLetter{\Falp}{\alpha}
\DefFuncLetter{\Fbet}{\beta}
\DefFuncLetter{\Fchi}{\chi}
\DefFuncLetter{\Fdel}{\delta}
\DefFuncLetter{\Fzet}{\zeta}
\DefFuncLetter{\FEps}{\Epsilon}
\DefFuncLetter{\Feta}{\eta}
\DefFuncLetter{\Fphi}{\phi}
\DefFuncLetter{\FPhi}{\Phi}
\DefFuncLetter{\Fpsi}{\psi}
\DefFuncLetter{\FPsi}{\Psi}
\DefFuncLetter{\Fgam}{\gamma}
\DefFuncLetter{\FGam}{\Gamma}
\DefFuncLetter{\Flam}{\lambda}
\DefFuncLetter{\FLam}{\Lambda}
\DefFuncLetter{\Fsig}{\sigma}
\DefFuncLetter{\Ftau}{\tau}
\DefFuncLetter{\Fome}{\omega}
\DefFuncLetter{\Feps}{\epsilon}
\DefFuncLetter{\Fthe}{\theta}
\DefFuncLetter{\Fvar}{\vartheta}
\DefFuncLetter{\FB}{B}
\DefFuncLetter{\FD}{D}
\DefFuncLetter{\FE}{E}
\DefFuncLetter{\FF}{F}
\DefFuncLetter{\FI}{I}
\DefFuncLetter{\FJ}{J}
\DefFuncLetter{\FM}{M}
\DefFuncLetter{\FR}{R}
\DefFuncLetter{\FV}{V}
\DefFuncLetter{\FX}{X}
\newcommand{\DefCalLetter}[2]{
\newcommand{#1}{\ensuremath{\mathcal{#2}}} 
}
\DefCalLetter{\CC}{C}
\DefCalLetter{\CD}{D}
\DefCalLetter{\CS}{S}
\DefCalLetter{\CV}{V}
\newcommand{\DefSubLetter}[2]{
\newcommand{#1}{\mathrm{#2}} 
}
\DefSubLetter{\slzer}{0}
\DefSubLetter{\sla}{a}
\DefSubLetter{\slA}{A}
\DefSubLetter{\slb}{b}
\DefSubLetter{\slB}{B}
\DefSubLetter{\slc}{c}
\DefSubLetter{\slC}{C}
\DefSubLetter{\sld}{d}
\DefSubLetter{\slD}{D}
\DefSubLetter{\sle}{e}
\DefSubLetter{\slE}{E}
\DefSubLetter{\slf}{f}
\DefSubLetter{\slF}{F}
\DefSubLetter{\slg}{g}
\DefSubLetter{\slG}{G}
\DefSubLetter{\slh}{h}
\DefSubLetter{\slH}{H}
\DefSubLetter{\sli}{i}
\DefSubLetter{\slI}{I}
\DefSubLetter{\slk}{k}
\DefSubLetter{\sll}{l}
\DefSubLetter{\slL}{L}
\DefSubLetter{\slm}{m}
\DefSubLetter{\slM}{M}
\DefSubLetter{\sln}{n}
\DefSubLetter{\slnr}{n}
\DefSubLetter{\slN}{N}
\DefSubLetter{\slo}{o}
\DefSubLetter{\slp}{p}
\DefSubLetter{\slP}{P}
\DefSubLetter{\slq}{q}
\DefSubLetter{\slQ}{Q}
\DefSubLetter{\slr}{r}
\DefSubLetter{\slR}{R}
\DefSubLetter{\sls}{s}
\DefSubLetter{\slS}{S}
\DefSubLetter{\slt}{t}
\DefSubLetter{\slT}{T}
\DefSubLetter{\slu}{u}
\DefSubLetter{\slU}{U}
\DefSubLetter{\slv}{v}
\DefSubLetter{\slw}{w}
\DefSubLetter{\slW}{W}
\DefSubLetter{\slx}{x}
\DefSubLetter{\slX}{X}
\DefSubLetter{\sly}{y}
\DefSubLetter{\slY}{Y}
\DefSubLetter{\slz}{z}
\DefSubLetter{\slZ}{Z}
\DefSubLetter{\slalp}{\alpha}
\DefSubLetter{\slbet}{\beta}
\DefSubLetter{\sldel}{\delta}
\DefSubLetter{\slDel}{\Delta}
\DefSubLetter{\sleps}{\epsilon}
\DefSubLetter{\slgam}{\gamma}
\DefSubLetter{\slphi}{\phi}
\DefSubLetter{\sltau}{\tau}
\DefSubLetter{\slxi}{\xi}
\DefSubLetter{\slthe}{\theta}
\def \cconv{\overset{\scriptscriptstyle{N}}{*}}
\newcounter{TheoremCounter}
\newtheorem{theorem}[TheoremCounter]{{Theorem}}
\begin{document}

\title{FFT Interpolation from Nonuniform Samples Lying in a Regular Grid}

\author{J. Selva   
}

\maketitle

\markboth{Submitted to the IEEE Trans. on Signal Processing}{}

\begin{abstract}

This paper presents a method to interpolate a periodic band-limited signal from its samples lying at nonuniform positions in a regular grid, which is based on the FFT and has the same complexity order as this last algorithm. This kind of interpolation is usually termed ``the missing samples problem'' in the literature, and there exists a wide variety of iterative and direct methods for its solution. The one presented in this paper is a direct method that exploits the properties of the so-called erasure polynomial, and it provides a significant improvement on the most efficient method in the literature, which seems to be the burst error recovery (BER) technique of Marvasti's et al. The numerical stability and complexity of the method are evaluated numerically and compared with the pseudo-inverse and BER solutions. 

\end{abstract}

\section{Introduction}
\label{sec:i}

In a variety of applications, a band-limited signal is converted from the analog to the discrete domain, but some of the resulting samples are lost due to various causes. Then, the problem is  to interpolate the lost samples from the available ones, assuming  the average rate of the latter fulfills the Nyquist condition. Just to cite a few applications in which this problem arises, it is a task required whenever a sampled signal is sent through a packet network and there exist  losses \cite{Ghandi08}. Also, it is a basic spectral estimation problem whenever a channel spectrum must be estimated from its nonuniform samples in OFDM systems \cite{Ozdemir07,Fertl10}, (pilot-aided estimation).  It is equivalent to the error calculation step for the so-called Bose-Chaudhuri-Hocquenghem (BCH) DFT codes, in which the coding is performed in the real field, before quantization, \cite{Vaezi14,Ferreira03,Rath04}.  Finally, in time-interleaved analog-to-digital converters (TI-ADCs), some samples at arbitrary positions can be unavailable due to a jitter calibration process, and they must be recovered, \cite{Pillai14,Tsui14}.

In sampling theory, this problem is usually termed ``the missing samples problem'', and is addressed assuming the signal's bandwidth is unknown but fulfills the Nyquist condition. The basic interpolation model is then the trigonometric one, i.e, the signal is viewed as a trigonometric polynomial, and the problem reduces to computing the polynomial's coefficients and from them the missing samples \cite{Tuncer07}, \cite[Ch. 17]{Marvasti01}. As can be easily deduced, this task is equivalent to solving a linear system for which there exist various standard techniques. There are, however, two main issues. The first is the numerical stability, due to the fact that  the round-off errors accumulate heavily if there is a large number of consecutive missing samples. The second is the complexity, given that the linear system size is large and the complexity order of the standard  techniques depends cubically on it. These two drawbacks have led to the development of various direct and iterative algorithms for recovering the lost samples during the last decades. Probably, the earliest solution in the literature was the Papoulis-Gerchberg algorithm \cite{Papoulis75,Ferreira94}, which is an iterative method based on the FFT. Standard techniques like the conjugate gradient and Lagrange interpolation methods have also been employed \cite[Sec. 3]{Marvasti00}. \cite{Ghandi08}  presents another iterative method and several ways to speed up its convergence using extrapolation. The BER technique of Marvasti et al. in \cite{Marvasti00} seems to be the most efficient technique to date. This technique is numerically stable and achieves the complexity order $\FO(NP)$, where $N$ is the total number of samples and $P$ the number of known ones. This complexity order is a clear improvement relative to the complexity order of the standard methods which is $\FO(P^3)$.

The purpose of this paper is to present a new direct solution for the missing sampling problem whose complexity order is $\FO(N\log N)$. If $a$ denotes the ratio of total to known samples $N/P$ and is assumed constant, then the complexity of the BER technique is $\FO(N^2)$ while that of the proposed method is $\FO(N\log N)$. Thus, the proposed method provides a significant improvement in terms of complexity. Actually, its arithmetic operations count is up to  factor twenty smaller than the corresponding count of the BER technique, for typical values of $N$. The method proposed in this paper is based on two theorems. The first gives a procedure to obtain the missing samples which consists of two FFTs plus three weighting operations. The coefficients of two of the three weighting operations depend on the sampling positions, and thus the procedure is efficient but only usable if these last coefficients have been pre-computed. The second theorem provides a solution to this last shortcoming, by specifying a procedure to compute the weighting coefficients in just two FFTs plus the computation of one complex exponential per output sample. The combination of these two theorems yields the proposed method whose complexity is $\FO(N\log N)$.  

The paper has been organized as follows. In the next sub-section, we introduce the notation and recollect several basic results about periodic signals and the FFT, that will be instrumental in the paper.  Then, in Sec. \ref{sec:tm} we introduce the missing samples problem and the BER technique. Afterward, we present  in Secs.  \ref{sec:pm} and \ref{sec:ce} the two theorems that make up the proposed method. The complexity order of the BER and proposed method are then discussed in Sec. \ref{sec:ca}. Finally, Sec. \ref{sec:ne} compares the standard pseudo-inverse, BER, and proposed methods numerically in terms of numerical stability and computational burden.

\subsection{Notation and basic concepts}
\label{sec:n}

We will employ the following notation:

\begin{itemize}
\item Throughout the paper, $t\in \mathbb R$ will denote the time variable, and $n$, $p$ and $q$ integer variables.
\item Definitions of new symbols and functions will be written using '$\equiv$'.
\item Vectors will be denoted in bold face, ($\vs$, $\vd$).
\item $[\vv]_n$ will denote the $n$th component of vector $\vv$.
\item For integer ${M\geq 0}$, $I_M$ will denote the set 
\[
\{0,\,1,\ldots,\, M-1\}.
\]
\item For a sequence $S_p$ and an index set $A$, the notation $\{S_p: p\in A\}$ will specify the set of values $S_p$ for indices $p$ in $A$.  

\item $\text{DFT}\{\vv\}$ and $\text{IDFT}\{\vv\}$ will respectively denote the DFT and IDFT of  vector $\vv$, but computed using a fast algorithm based on the FFT. Note that there exist fast algorithms of this kind for any vector length, like the Chirp transform, \cite[Sec. 6.9.2]{Oppenheim89}.

\item For two sets, $A$ and $B$, $A\backslash B$ will denote the set of elements in $A$ not in $B$. 

\item $\va\odot\vb$ will denote the component-wise product of $\va$ and $\vb$, i.e, for $\va$ and $\vb$ of equal length, $[\va\odot\vb]_n=[\va]_n[\vb]_n$. 

\item For two $N$-period discrete sequences, $\Fa(n)$ and $\Fb(n)$, $(\Fa \cconv \Fb)(n)$ will denote their cyclic convolution, defined by
\be{eq:336}\nonumber 
(\Fa\cconv\Fb)(n)\equiv \sum_{p=q}^{q+N-1} \Fa(p)\Fb(n-p),
\ee
where $q$ can be any integer, given that $\Fa(n)$ and $\Fb(n)$ have period $N$. The cyclic convolution can be efficiently evaluated using the FFT by means of the formula
\be{eq:337}
(\Fa\cconv\Fb)(n)=[\text{IDFT}\{\text{DFT}\{\va\}\odot\text{DFT}
\{\vb\}\}]_{n+1},\;n\in I_N,
\ee
where
\be{eq:338}\nonumber
[\va]_{n+1}\equiv \Fa(n),\;\;[\vb]_{n+1}\equiv \Fb(n),\; n\in I_N.
\ee
\end{itemize}

In the paper, we will exploit several basic results about the evaluation of periodic band-limited signals using the FFT. For integer $N\geq 1$, $\mathcal F_N$ will denote the set of signals whose Fourier series is restricted to the index range $[0,N-1]$; specifically, $\mathcal F_N$ contains the signals of the form
\be{eq:269} 
\Fv(t)=\sum_{p=0}^{N-1}V_p\Fe^{j2\pi pt/N},
\ee
where $V_p\in \mathbb C$ and $t\in \mathbb R$. 

For $\Fv(t)\in \mathcal F_N$, consider the following vectors of samples of $\Fv(t)$ and $\Fv'(t)$, and Fourier coefficients $V_p$, 
\be{eq:270} 
[\vv]_{n+1}\equiv \Fv(n),\; [\vv']_{n+1}\equiv \Fv'(n),\;\;[\tilde \vv]_{p+1}\equiv V_p,\; (n,\,p\in I_N).
\ee
As is well known,  we can switch from $\vv$ to $\tilde \vv$ and vice versa through the equations  
\begin{gather}
\label{eq:376}\nonumber\vv=N \text{IDFT}\{\tilde \vv\}, \\
\label{eq:366}\nonumber \tilde \vv =\frac{1}{N}\text{DFT}\{\vv\}.
\end{gather}
We may express this relation using set notation in the following way,
\be{eq:404}
\{v_n:\, n\in I_N\} \xrightarrow{\text{DFT}} \{NV_p: p\in I_N\},
\ee
where we interpret these sets as ordered. 

$\mathcal F_N$ is  closed under differentiation, i.e, if ${\Fv(t)\in \mathcal F_N}$ then ${\Fv'(t)\in \mathcal F_N}$. This property is obvious since the Fourier coefficients of $\Fv'(t)$ are $\{j2\pi p V_p/N:\, p\in I_N\}$. A consequence of this property is that we may compute the values  $\Fv'(n)$, $n\in I_N$, using the DFT/IDFT pair. More precisely, we have  
\be{eq:278}
\vv'=\text{IDFT}\{\text{DFT}\{\vv\}\odot\vd\},
\ee
where 
\be{eq:335}
[\vd]_{p+1}\equiv j2\pi p/N,\;  p \in I_N.
\ee
\section{The missing samples problem}
\label{sec:tm}

A basic interpolator for a band-limited signal $\Fs_o(t)$ is the trigonometric one, i.e, it consists of viewing $\Fs_o(t)$ as a trigonometric polynomial of the form 
\be{eq:383}
\Fs_o(t)\approx\sum_{p=p_1}^{p_1+P-1} S_{o,p}\Fe^{j2\pi p t/T},
\ee
where we assume that $\Fs_o(t)$ is interpolated in the range $[0,T]$ with $T>0$,  $S_{o,p}$ denotes the $p$th coefficient, $p_1$ the first polynomial index, and $P>0$ the number of coefficients. If (\ref{eq:383}) is sufficiently accurate and $\Fs_o(t)$ is sampled with period $T/N$ for an integer $N\geq P$, then it is well-known that the coefficients $S_{o,p}$ and the value of $\Fs_o(t)$ at any $t\in [0,T]$ can be efficiently computed from the set of samples $\{\Fs_o(nT/N),n\in I_N\}$ using algorithms from the FFT family \cite{Frigo05}. 

In some applications, however, $N-P$ samples from the set $\{\Fs_o(nT/N),\,n\in I_N\}$ are lost due to various causes, and then the problem consists of recovering these missing samples in a numerically stable way and with low computational burden from the known ones. More precisely, if $J$ denotes the indices $n$ of the known samples, then $J$ has $P$ elements and the objective is to obtain the samples $\{\Fs_o(nT/N),\, n\in J^c\}$, where $J^c$ is the complement of $J$ relative to $I_N$,
\be{eq:384}\nonumber 
J^c\equiv I_N\backslash J.
\ee

In this problem, the initial index $p_1$ and the time period $T$ are irrelevant, given that we may scale $\Fs_o(t)$ so that its period is $N$ and its first frequency is zero. So, in order to simplify the notation, we may state the problem in terms of the following normalized signal
\be{eq:291}\nonumber
\Fs(t)\equiv \Fs_o(tT/N)\Fe^{-j2\pi p_1t/N}.
\ee
From (\ref{eq:383}) we have that $\Fs(t)$ has the form 
\be{eq:292}
\Fs(t)=\sum_{p=0}^{P-1} S_p \Fe^{j2\pi p t/N},
\ee
where $S_p\equiv S_{o,p+p_1}$.  In terms of $\Fs(t)$, the problem consists of computing the samples $\{\Fs(n):\, n\in J^c\}$,  assuming the samples $\{\Fs(n):\, n\in J\}$ are known. 

As can be readily checked, the solution to this problem just involves the inversion of the linear system
\be{eq:380}
\Fs(n)=\sum_{p=0}^{P-1} S_p \Fe^{j2\pi p t/N},\;\; n\in J, 
\ee
in which the unknowns are the coefficients $S_p$, followed by the computation of the desired samples using (\ref{eq:292}) for $t\in J^c$. The inversion  can  in principle be tackled using conventional linear algebra techniques whose complexity order is $\FO(P^3)$, \cite[Ch. 3]{Golub96}. It must be noted that (\ref{eq:380}) is often ill conditioned, specially if there exist long sequences of missing samples, and it is then necessary to resort to the pseudo-inverse. The high $\FO(P^3)$ complexity of conventional methods has led to the development of a variety of iterative and non-iterative methods with  lower complexity during the last decades; (see \cite[Ch. 17]{Marvasti00} for a review on this topic).

In \cite{Marvasti00}, Marvasti et al. presented the so-called BER technique for this problem whose complexity order is $\FO(NP)$. This order is a clear improvement relative to the $\FO(P^3)$ order of the standard solutions, and relative to other methods like the Lagrange interpolation and conjugate gradient methods. The key of the BER method consists of two relations between the following three polynomials:

\begin{itemize}
\item $\Fs_J(t)$: Element of $\mathcal F_N$ such that $\Fs_J(n)=\Fs(n)$ if $n\in J$ and $\Fs_J(n)=0$ if $n\in J^c$.
\item $\Fs_{J^c}(t)$: Polynomial with the same definition as $\Fs_J(t)$ but with $J$ and $J^c$ switched. 
\item $\Fphi(t)$: Erasure polynomial. This is the monic element of $\mathcal F_N$ of degree $N-P$ that has one simple zero at each of the  instants of the missing samples (set $J^c$), i.e, the polynomial 
\be{eq:260}  
\Fphi(t)\equiv \prod_{n\in J^c}(\Fe^{j2\pi t/N}-\Fe^{j2\pi n/N}).
\ee
\end{itemize}

To introduce the first relation in the BER technique, note  that to compute the desired samples $\{\Fs(n): n\in J^c\}$ is equivalent to compute ${\{\Fs_{J^c}(n): n\in I_N\}}$, given that ${\Fs_{J^c}(n)=0}$  if ${n\in J}$. Additionally, from the definitions of ${\Fs_J(t)}$ and ${\Fs_{J^c}(t)}$, it is clear that 
\be{eq:385}
\Fs(n)=\Fs_J(n)+\Fs_{J^c}(n), \, n\in I_N.
\ee
This equation can be written in the coefficients (frequency) domain using (\ref{eq:404}),
\be{eq:405}
S_p=S_{J,p}+S_{J^c,p},\, p\in I_N,
\ee
where $S_{J,p}$ and $S_{J^c,p}$ respectively denote the Fourier coefficients of $\Fs_J(t)$ and $\Fs_{J^c}(t)$. But $S_p=0$ if $P\leq p<N$ and, therefore, (\ref{eq:405}) implies 
\be{eq:386}
S_{J^c,p}=-S_{J,p}, P\leq p<N. 
\ee
So, the DFT of the samples $\{\Fs_J(n): n\in I_N\}$ gives partial information about $\Fs_{J^c}(t)$, namely its coefficients $S_{J^c,p}$ for $P\leq p<N$.

The second relation links $\Fs_{J^c}(t)$ with the erasure polynomial $\Fphi(t)$ and is the following
\be{eq:387}
\Fs_{J^c}(n)\Fphi(n)=0, \, n\in I_N.
\ee
This relation is also a direct consequence of the definitions of $\Fs_{J^c}(t)$ and $\Fphi(t)$, given that $J\cup J^c=I_N$, $\Fs_{J^c}(n)=0$ if $n\in J$, and $\Fphi(n)=0$ if $n\in J^c$. If we take (\ref{eq:387}) to the frequency domain using the DFT (\ref{eq:404}), we have that (\ref{eq:387}) is turned into a cyclic convolution of the coefficients of $\Fs_{J^c}(t)$ and $\Fphi(t)$. More precisely, we have
\be{eq:388}
\sum_{p=0}^{N-P}\phi_{N-P-p}S_{J^c,q+p}=0,\;\; q\in \mathbb Z,
\ee
where $\Fphi_p$ denotes the coefficients of $\Fphi(t)$, and we take $S_{J^c,p}$ as a periodic sequence, i.e, $S_{J^c,p+N}=S_{J^c,p}$, $p\in \mathbb Z$. This second relation can be written as a recursive formula for computing $S_{J^c,q}$, if $S_{J^c,q+p}$ is known for $1\leq p <N-P$. For this, just note from (\ref{eq:260}) that $\Fphi_{N-P}=1$ and solve for $S_{J^c,q}$ in (\ref{eq:388}),
\be{eq:389}
S_{J^c,q}=-\sum_{p=1}^{N-P}\phi_{N-P-p}S_{J^c,q+p}.
\ee
We have that (\ref{eq:386}) already provides the coefficients $S_{J^c,q+p}$ in this sum if $q=P$. So we may recursively apply this last formula for $q=P, P-1,\ldots, 0$, in order to compute the missing coefficients $S_{J^c,q}$, $0\leq q <P$.

Finally, from $\{S_{J^c,p}: p\in I_N\}$ we obtain the desired samples $\{\Fs_{J^c}(n):\, n\in J^c\}$ through one inverse DFT,
\be{eq:390}
\Fs(n)=\Fs_{J^c}(n)=[\text{IDFT}\{\vs_{J^c}\}]_{n+1},\;\; n\in J^c,
\ee
where
\be{eq:392}\nonumber
[\tilde\vs_{J^c}]_{p+1}\equiv S_{J^c,p}, \; p\in I_N.
\ee

We can see in this method that the operation of inserting zeros, either in a vector or using the erasure polynomial, is the key to obtaining an efficient solution. Actually, the zero insertion in the definitions of $\Fs_J(t)$ and $\Fs_{J^c}(t)$ permits the use of the DFT in going from (\ref{eq:385}) to (\ref{eq:405}). And the multiplication by the erasure polynomial in (\ref{eq:387}) produces a zero sequence and the corresponding cyclic convolution in (\ref{eq:388}). There is, however, a more powerful way to exploit this zero insertion property, that leads to a method entirely based on the DFT and weighting operations with complexity $\FO(N)$. The method is based on considering the properties of the signal $\Fs(t)\Fphi(t)$. This method is presented in the next section and yields the desired samples in just two DFTs, if some samples of $\Fphi(t)$ and its derivative are known.

\section{Proposed method for fixed sampling positions}
\label{sec:pm}

We have the following theorem.

\begin{theorem}
\label{th:2}

The desired samples are given by the formula
\be{eq:328}
\Fs(n)=\frac{1}{\Fphi'(n)}[\text{\textnormal{IDFT}}\{\text{\textnormal{DFT}}
\{\vs_\phi\}\odot\vd\}]_{n+1},
\;\; n\in J^c,
\ee
where $\vd$ was defined in (\ref{eq:335}) and 
\be{eq:329}
[\vs_\phi]_{n+1}\equiv
\left\{\begin{array}{lll}
\Fs(n)\phi(n) & \textrm{if} &  n\in J \\
0 & \textrm{if} & n\in J^c.
\end{array}
\right. 
\ee
\end{theorem}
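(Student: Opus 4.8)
The plan is to recognize the right-hand side of (\ref{eq:328}) as a sampled derivative and then collapse the problem to a one-line application of the product rule. First I would remove the apparent case split in the definition (\ref{eq:329}): since $\Fphi(n)=0$ for every $n\in J^c$ by (\ref{eq:260}), the product $\Fs(n)\Fphi(n)$ already vanishes on $J^c$, so in fact $[\vs_\phi]_{n+1}=\Fs(n)\Fphi(n)$ holds for \emph{all} $n\in I_N$. Consequently $\vs_\phi$ is nothing but the sample vector of the single signal $\Fg(t)\equiv\Fs(t)\Fphi(t)$, and the whole formula is about differentiating this product.

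The second step is to check that $\Fg\in\mathcal{F}_N$, which is what licenses the DFT-based differentiation identity (\ref{eq:278}). By (\ref{eq:292}) the factor $\Fs$ has Fourier support $\{0,\ldots,P-1\}$, while $\Fphi$, being a degree-$(N-P)$ polynomial in $\Fe^{j2\pi t/N}$ by (\ref{eq:260}), has support $\{0,\ldots,N-P\}$; the product therefore has support contained in $\{0,\ldots,N-1\}$, so $\Fg\in\mathcal{F}_N$ and no aliasing occurs when its samples are transformed. Applying (\ref{eq:278}) to $\Fg$ then identifies the bracketed quantity in (\ref{eq:328}), namely $[\text{IDFT}\{\text{DFT}\{\vs_\phi\}\odot\vd\}]_{n+1}$, with the derivative sample $\Fg'(n)$, for every $n\in I_N$.

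The third step is the product rule: $\Fg'=\Fs'\Fphi+\Fs\Fphi'$, so evaluating at $n\in J^c$ and using $\Fphi(n)=0$ there kills the first term and leaves $\Fg'(n)=\Fs(n)\Fphi'(n)$. Dividing by $\Fphi'(n)$ gives exactly (\ref{eq:328}). The one substantive point, which I expect to be the crux, is to justify this division, i.e.\ to show $\Fphi'(n)\neq0$ for $n\in J^c$. I would argue in the variable $z=\Fe^{j2\pi t/N}$: the polynomial $\prod_{m\in J^c}(z-\Fe^{j2\pi m/N})$ has only \emph{simple} roots, because the points $\Fe^{j2\pi m/N}$ for distinct $m\in J^c\subset I_N$ are pairwise distinct, so its $z$-derivative is nonzero at each root; since $dz/dt=(j2\pi/N)z\neq0$, the chain rule transfers this to $\Fphi'(n)\neq0$. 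This simultaneously confirms that the formula is well posed and completes the argument.
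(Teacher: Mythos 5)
Your proposal is correct and follows essentially the same route as the paper: form $\Fs(t)\Fphi(t)$, observe its samples on the full grid are all known (zero on $J^c$), verify it lies in $\mathcal F_N$ by degree counting in $z=\Fe^{j2\pi t/N}$, differentiate via (\ref{eq:278}), and apply the product rule at $n\in J^c$. Your explicit chain-rule justification that $\Fphi'(n)\neq 0$ at the simple zeros is a slightly more detailed version of the paper's remark to the same effect.
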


This theorem specifies the method proposed in this paper to compute the desired samples $\{\Fs(n)$:\, $n\in J^c\}$,  if the values of $\Fphi(n)$ and $1/\Fphi'(n)$ appearing in (\ref{eq:328}) and (\ref{eq:329}) are available. For implementing it, it is only necessary to form the nonuniformly zero-padded vector $\Fs_\phi$ in (\ref{eq:329}), and then perform the steps specified in (\ref{eq:328}), i.e,

\begin{enumerate}
\item Compute the DFT of $\vs_\phi$.
\item Weight the result component-wise using $\vd$.
\item Compute the inverse DFT.
\item Multiply the samples with $n\in J^c$ by $1/\Fphi'(n)$.
\end{enumerate}

If what is required is the set of Fourier coefficients ${\{S_p:\,p\in I_P\}}$, then they can be computed from ${\{\Fs(n):\,n\in I_N\}}$, through one FFT using the formula
\be{eq:340}\nonumber
S_p=\text{DFT}\{\vs_1\},\; p\in I_P, 
\ee
with
\be{eq:341}\nonumber
[\vs_1]_{n+1}\equiv \Fs((N/Q)n),\; n\in I_Q,
\ee
where $Q$ is the smallest divisor of $N$ such that $Q\geq P$. 

\begin{proof}[Proof of theorem \ref{th:2}]
Consider the signal 
\be{eq:244}\nonumber
\Fs_\phi(t)\equiv \Fs(t)\Fphi(t)
\ee
and two key facts related with it. The first is that we know its value at all instants in the regular grid $I_N$. This is so because either $n\in J$ and then both factors of the product $\Fs(n)\phi(n)$ are known, or $n\in J^c$ and then $\Fs(n)\Fphi(n)=0$ because $\Fphi(n)=0$. As a consequence, we have enough information to form the vector  $\vs_\phi$ in (\ref{eq:329}), akin to $\vv$ in (\ref{eq:270}), given that the only samples of $\Fs(t)$ appearing in (\ref{eq:329}) are the known ones, [$\Fs(n)$, $n\in J$].  

The second fact is that $\Fs_\phi(t)$ belongs to $\mathcal F_N$. We can see that this is
so if we view (\ref{eq:292}) and (\ref{eq:260}) as polynomials in the variable
$z=\Fe^{j2\pi t/N}$. Since the right-hand side of (\ref{eq:292}) has degree $P-1$ and
(\ref{eq:260}) has degree $N-P$ (number of elements of $J^c$), then $\Fs(t)\Fphi(t)$ has
degree $N-1$ in $z$. In other words, $\Fs_\phi(t)$ has the form in (\ref{eq:269}). As a
consequence, we may compute the derivative samples of $\Fs_\phi(t)$ using
(\ref{eq:278}). We have
\be{eq:280}
\vs_\phi'=\text{IDFT}\{\text{DFT}\{\vs_\phi\}\odot\vd\},
\ee
where 
\be{eq:281}\nonumber
[\vs_\phi']_{n+1}\equiv \Fs_\phi'(n), \;\;n\in I_N.
\ee

Finally, the product differentiation rule allows us to obtain the desired samples
$\Fs(n)$, $n\in J^c$, from $\vs_\phi'$, given that $\Fphi(t)$ has placed zeros at the
desired instants $n\in J^c$. Specifically, since $\Fphi(n)=0$ if $n\in J^c$, we have
\be{eq:284}\nonumber
\Fs_\phi'(n) =\Fs'(n)\Fphi(n)+\Fs(n)\Fphi'(n)
=\Fs(n)\Fphi'(n).
\ee
So, solving for $\Fs(n)$ we obtain
\be{eq:249}\nonumber
\Fs(n)=\frac{\Fs_\phi'(n)}{\Fphi'(n)}=\frac{[\vs_\phi']_{n+1}}{\Fphi'(n)},\;\; n\in J^c.
\ee
Note that the division by $\Fphi'(n)$ is valid because the instants $n\in J^c$ are simple zeros of $\Fphi(n)$. The theorem's formula in (\ref{eq:328}) is the result of substituting (\ref{eq:280}) into this last equation.
\end{proof}

\section{Computation of the erasure polynomial weights $\Fphi(n)$ and $\Fphi'(n)$}
\label{sec:ce}

In Theorem \ref{th:2}, the samples of $\Fphi(n)$ and $\Fphi'(n)$ depend on the sampling scheme and, therefore, they must be re-computed whenever the set $J$ changes. If this re-computation is performed using (\ref{eq:260}) directly, then the cost of obtaining ${\{\Fphi(n):\,n\in J\}}$ is $\FO((N-P)P)$. As to the samples ${\{\Fphi'(n):\, n\in J^c\}}$, they can be computed from the derivative of (\ref{eq:260}),
\be{eq:332}\nonumber 
\Fphi'(t)=\frac{j2\pi}{N}\Fe^{j2\pi t/N}\sum_{k\in J^c}\prod_{n\in J^c\setminus \{k\}}(\Fe^{j2\pi t/N}-\Fe^{j2\pi n/N})
\ee
with complexity $\FO((N-P)^2(N-P-1))$. These complexities are too high for real-time systems. The following theorem presents a method to compute these values with complexity $\FO(N\log N)$. It involves the computation of two size-$N$ FFTs and $N$ complex exponentials.

\begin{theorem}
\label{th:1}

Consider the $N$-period sequence specified by 
\be{eq:319} 
\Falp(n)\equiv
\left\{
\begin{array}{ll}
\log(1-\Fe^{-j2\pi n/N}),& 1\leq n<N\\
0,& n=0
\end{array}
\right.
\ee
and $\Falp(n+N)=\Falp(n)$, $n\in \mathbb{Z}$. Let $\Fbet(n)$ denote the cyclic convolution
\be{eq:320}
\Fbet(n)\equiv (\mathrm 1_{J^c}\cconv\Falp)(n),
\ee
where $\mathrm 1_{J^c}(n)$ is the cyclic indicator sequence for $J^c$, defined by
\be{eq:321}
\mathrm{1}_{J^c}(n)\equiv
\left\{
\begin{array}{lll}
1 &\textrm{if}& n\in J^c\\
0 &\textrm{if}& n\in J
\end{array}
\right.
\ee
and $\mathrm{1}_{J^c}(n)=\mathrm{1}_{J^c}(n+N)$, $n\in\mathbb{Z}$. The samples of $\Fphi(t)$ and $\Fphi'(t)$ required in theorem \ref{th:2} are given by 
\begin{gather}
\label{eq:325}
\Fphi(n)=\exp\Big(-\frac{j2\pi nP}{N}+\Fbet(n) \Big),\; n\in J,\\
\label{eq:326}
\Fphi'(n)=\frac{j2\pi}{N}\exp\Big(-\frac{j2\pi nP}{N}+\Fbet(n) \Big),\; n\in J^c.
\end{gather}

\end{theorem}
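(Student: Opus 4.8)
The plan is to evaluate $\Fphi(n)$ and $\Fphi'(n)$ directly from the product definition (\ref{eq:260}) and to recognize the resulting sum of logarithms as the cyclic convolution $\Fbet(n)$ of (\ref{eq:320}). The starting observation, valid in both cases, is that each factor of (\ref{eq:260}) evaluated at an integer $t=n$ can be rewritten by pulling out $\Fe^{j2\pi n/N}$,
\[
\Fe^{j2\pi n/N}-\Fe^{j2\pi k/N}=\Fe^{j2\pi n/N}\bigl(1-\Fe^{-j2\pi(n-k)/N}\bigr).
\]
Since $n$ is an integer, $\Fe^{j2\pi n}=1$, so the accumulated prefactors collapse: a product over the $N-P$ indices of $J^c$ contributes $\Fe^{j2\pi n(N-P)/N}=\Fe^{-j2\pi nP/N}$, which is precisely the first term inside the exponentials of (\ref{eq:325}) and (\ref{eq:326}). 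The remaining task is to show that the product of the factors $1-\Fe^{-j2\pi(n-k)/N}$ equals $\exp(\Fbet(n))$ in each case.

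For $n\in J$, no index $k\in J^c$ equals $n$, so every argument $n-k$ is nonzero modulo $N$ and $\Falp(n-k)=\log(1-\Fe^{-j2\pi(n-k)/N})$ is given by the first branch of (\ref{eq:319}). Writing each factor as $\exp(\Falp(n-k))$, the product becomes $\exp\bigl(\sum_{k\in J^c}\Falp(n-k)\bigr)$, and by the definitions of the cyclic convolution and the indicator (\ref{eq:321}) this sum is exactly $\Fbet(n)=\sum_{k\in J^c}\Falp(n-k)$, which yields (\ref{eq:325}). I would note in passing that the branch chosen for the logarithm is immaterial, since $\exp$ inverts $\log$ on any branch and turns sums into products unconditionally; all that is needed is the nonvanishing of $1-\Fe^{-j2\pi m/N}$ for $1\le m<N$.

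For $n\in J^c$ the polynomial vanishes, $\Fphi(n)=0$, so I would instead differentiate (\ref{eq:260}) by the product rule and evaluate at $t=n$. Because $n\in J^c$ is a simple zero, in the expansion only the term in which the factor vanishing at $t=n$ has itself been differentiated survives; every other term retains that factor and so is killed. This leaves
\[
\Fphi'(n)=\frac{j2\pi}{N}\Fe^{j2\pi n/N}\prod_{k\in J^c\setminus\{n\}}\bigl(\Fe^{j2\pi n/N}-\Fe^{j2\pi k/N}\bigr).
\]
The same factoring and prefactor collapse as above reduce this to $(j2\pi/N)\,\Fe^{-j2\pi nP/N}\exp\bigl(\sum_{k\in J^c\setminus\{n\}}\Falp(n-k)\bigr)$.

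The crux is matching this last sum to the same $\Fbet(n)$. The convolution $\Fbet(n)=\sum_{k\in J^c}\Falp(n-k)$ now includes the index $k=n$, whose contribution is $\Falp(0)$; the decisive point is that (\ref{eq:319}) sets $\Falp(0)=0$, so that term vanishes and $\sum_{k\in J^c\setminus\{n\}}\Falp(n-k)=\Fbet(n)$ exactly, giving (\ref{eq:326}). This is the step I expect to be the main obstacle to state cleanly: a single convolution $\Fbet$ must serve both formulas at once, and it does so only because the convention $\Falp(0)=0$ silently deletes the diagonal term $k=n$ that is present when $n\in J^c$ but absent (since $n\notin J^c$) when $n\in J$. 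Once this is in place, both formulas follow simultaneously from the one convolution.
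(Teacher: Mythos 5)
Your proof is correct. The derivation of (\ref{eq:325}) for $n\in J$ is essentially identical to the paper's: factor out $\Fe^{j2\pi n/N}$ from each term of (\ref{eq:260}), collapse the prefactors using $\Fe^{j2\pi n}=1$, and recognize $\sum_{k\in J^c}\Falp(n-k)$ as the cyclic convolution $\Fbet(n)$. Where you genuinely diverge is in (\ref{eq:326}). The paper introduces the complementary polynomial $\Fphi_1(t)\equiv\prod_{k\in J}(\Fe^{j2\pi t/N}-\Fe^{j2\pi k/N})$, proves the identity $\Fphi(t)\Fphi_1(t)=\Fe^{j2\pi t}-1$, differentiates that identity to get $\Fphi'(n)=j2\pi/\Fphi_1(n)$ for $n\in J^c$, and then converts the convolution against $\mathrm{1}_J$ into one against $\mathrm{1}_{J^c}$ via the auxiliary lemma $\sum_{n=0}^{N-1}\Falp(n)=\log(N)$, which it proves in a separate appendix. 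You instead differentiate the product (\ref{eq:260}) directly, observe that at a simple zero only the term where the vanishing factor is differentiated survives, and then notice that the omitted diagonal index $k=n$ contributes $\Falp(0)=0$ to the convolution, so the restricted sum over $J^c\setminus\{n\}$ already equals the full $\Fbet(n)$. Your route is shorter and more elementary: it dispenses with $\Fphi_1$, with the $z^N-1$ factorization, and with the appendix lemma entirely, and it makes explicit the role of the convention $\Falp(0)=0$, which the paper's argument also uses but only implicitly (inside the all-ones convolution). What the paper's detour buys is the closed form for $\Fphi_1(n)$ and the reciprocal relation $\Fphi'(n)\Fphi_1(n)=j2\pi$, which is of some independent interest but is not needed for the theorem itself. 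Your handling of the logarithm branch and of the periodic extension of $\Falp$ is also sound.
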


Note that the sequence $\Falp(n)$ is independent of the sampling scheme and, therefore, it can be computed offline. This theorem implies that  the computation of the required samples of $\Fphi(n)$ and $\Fphi'(n)$ just requires the cyclic convolution in (\ref{eq:320}) and the computation of one complex exponential per sample. Since the cyclic convolution can be performed using the FFT [Eq. (\ref{eq:337})], the total computational cost is $\FO(N\log N)$. In computing the cyclic convolution, the DFT of the sequence $\Falp(n)$ can be spared, given that it can be performed offline. So to update $\phi(n)$ and $\Fphi'(n)$ just requires two FFTs. 

\begin{proof}[Proof of theorem \ref{th:1}.]
Let us write (\ref{eq:260}) in terms of $\Falp(n)$, taking into account that $J^c$ has $N-P$ elements. If $n\in J$, we have:
\begin{align}
\label{eq:334}
\Fphi(n)&= \prod_{k\in J^c}(\Fe^{j2\pi n/N}-\Fe^{j2\pi k/N}) \\
&= \prod_{k\in J^c}(\Fe^{j2\pi n/N}(1-\Fe^{j2\pi (k-n)/N})) \nonumber\\
&= \prod_{k\in J^c}(\Fe^{j2\pi n/N}\Fe^{\Falp (n-k)}) \nonumber\\
&= \Fe^{j2\pi n (N-P)/N}\prod_{k\in J^c}\Fe^{\Falp (n-k)} \nonumber \\
\label{eq:294}
&= \exp\Big(-\frac{j2\pi nP}{N}+\sum_{k\in J^c}\Falp(n-k)\Big).
\end{align}
In this last step, note that the summation is the cyclic convolution of $\Falp(n)$ with the indicator sequence of $J^c$ in (\ref{eq:321}); i.e,
\be{eq:322}\nonumber
\sum_{k\in J^c}\Falp(n-k)=(\mathrm 1_{J^c}\cconv \Falp)(n),
\ee
and we have from (\ref{eq:294})
\be{eq:327}
\Fphi(n)=\exp\Big(-\frac{j2\pi nP}{N}+ (\mathrm 1_{J^c}\cconv \Falp)(n)\Big),\; n\in J.
\ee
Thus we have proved (\ref{eq:325}).

For deriving (\ref{eq:326}), we must consider first the signal $\Fphi_1(t)$ with the same definition as $\Fphi(t)$ in (\ref{eq:260}), but with $J$ in place of $J^c$, i.e, the signal
\be{eq:295} 
\Fphi_1(t)\equiv \prod_{n\in J}(\Fe^{j2\pi t/N}-\Fe^{j2\pi n/N}).
\ee
For $\Fphi_1(t)$, we may repeat the derivations in (\ref{eq:334}) to (\ref{eq:327}) already performed for $\Fphi(t)$ and, as can be easily checked, the result is the formula in (\ref{eq:327}) but with $J$ and $J^c$ switched and $N-P$ in place of $P$ in the first term of the exponent. Specifically, we obtain
\be{eq:312}
\Fphi_1(n)=\exp\Big(-\frac{j2\pi n(N-P)}{N}+(1_{J}
\overset{N}{*}\Falp)(n)\Big),\; n\in J^c,
\ee
where $1_J(n)$ is the indicator sequence of $J$, defined by $1_J(n+N)=1_J(n)$, $n\in \mathbb Z$, and
\be{eq:313}\nonumber
\mathrm{1}_{J}(n)\equiv
\left\{
\begin{array}{lll}
1 &\textrm{if}& n\in J\\
0 &\textrm{if}& n \in J^c.
\end{array}
\right.
\ee

Next, we require two simple results about $\Falp(n)$ and the indicators $\mathrm 1_{J^c}(n)$ and $\mathrm 1_{J}(n)$. The first is the property
\be{eq:307}
\sum_{n=0}^{N-1}
\Falp(n)=\log(N), 
\ee
which is proved in Ap. \ref{ap:1}. The second is the fact that we may write (\ref{eq:312}) in terms of $1_{J^c}(n)$ instead of $1_J(n)$, because these two indicator functions are complementary; i.e, since $J\cup J^c=I_N$ and $J\cap J^c=\emptyset$, we have
\be{eq:315}
1_{\text{all}}(n)=\mathrm{1}_J(n)+\mathrm{1}_{J^c}(n),
\ee
where $1_{\text{all}}(n)$ is the all-ones sequence. 

Now, using (\ref{eq:307}) and (\ref{eq:315}) we have that $(\mathrm{1}_J\cconv\Falp)(n)$ can be obtained from $(\mathrm{1}_{J^c}\cconv\Falp)(n)$:
\begin{align}
(\mathrm{1}_J\cconv\Falp)(n)&=(1_{\text{all}}\cconv\Falp)(n)-(\mathrm{1}_{J^c}
\cconv\Falp)(n) \nonumber \\
&=\log(N)-(\mathrm{1}_{J^c}\cconv\Falp)(n). \nonumber
\end{align}
And substituting this formula into (\ref{eq:312}), we obtain a result of the form in (\ref{eq:327}) but for $\Fphi_1(n)$, 
\be{eq:317}
\Fphi_1(n)=N\exp\Big(-\frac{j2\pi n(N-P)}{N}-(\mathrm{1}_{J^c}\cconv\Falp)(n)\Big),\; 
n\in J^c.
\ee

Let us derive the formula for $\Fphi'(n)$, $n\in J^c$. For this, consider the product $\phi(t)\Fphi_1(t)$. From (\ref{eq:334}) and (\ref{eq:295}), we have that this product is a monic polynomial whose  root set is $\{ \Fe^{j2\pi n/N},\, n\in I_N\}$, given that $J\cup J^c=I_N$ and $J\cap J^c=\emptyset$. So, we have
\begin{align}
\label{eq:296}
\Fphi(t)\Fphi_1(t)&=\prod_{n=0}^{N-1} (\Fe^{j2\pi t/N}-\Fe^{j2\pi n/N}) \\
&=\left.\prod_{n=0}^{N-1}(z-\Fe^{j2\pi n/N})\right|_{z=\Fe^{j2\pi t/N}} \nonumber \\
&=(z^N-1)|_{z=\Fe^{j2\pi t/N}}
=\Fe^{j2\pi t}-1. \nonumber
\end{align}
In this derivation, we have used the fact that the monic $N$th-order polynomial with root set $\{ \Fe^{j2\pi n/N},\, n\in I_N\}$ is $z^N-1$. 

Next, let us apply the product differentiation rule to the equation derived in (\ref{eq:296}),
\be{eq:298}\nonumber
\Fphi(t)\Fphi_1(t)=\Fe^{j2\pi t}-1,
\ee
at $t=n$, $n\in J^c$. For its left-hand side, we have
\begin{align}
(\Fphi(t)\Fphi_1(t))'_{t=n}&=\big(\Fphi'(t)\Fphi_1(t)+\Fphi(t)\Fphi'_1(t)\big)_{t=n}
\nonumber \\
\label{eq:302}
&=\Fphi'(n)\Fphi_1(n),
\end{align}
given that $\Fphi(n)=0$ if $n\in J^c$. And for its right-hand side, we have
\be{eq:299}
(\Fe^{j2\pi t}-1)'_{t=n}=j2\pi.
\ee
So, the combination of (\ref{eq:302}) and (\ref{eq:299}) yields
\be{eq:300}\nonumber
\Fphi'(n)=\frac{j2\pi}{\Fphi_1 (n)},\;n\in J^c.
\ee
Finally, substituting (\ref{eq:317}) into this last formula we obtain
\be{eq:318}\nonumber
\Fphi'(n)=\frac{j2\pi}{N}\exp\Big(-\frac{j2\pi nP}{N}+ (\mathrm{1}_{Jc} \cconv \Falp)(n)\Big),\; n\in J^c,
\ee
which is (\ref{eq:326}).
\end{proof}

\section{Complexity analysis}
\label{sec:ca}

In this section, we present counts of the number of floating point operations (flops) for both the proposed method and the BER technique. Since the complexity of  basic operations like multiplication and complex exponential may vary wildly with the hardware implementation, we have employed the convention in Fig. \ref{fig:3} for measuring the complexity. 
\begin{figure}\centerline{
\begin{tabular}{l|c}
Operation & Flops\\\hline
Real sum & 1\\
Complex sum & 2\\
Real multiplication & 1\\
Complex multiplication & 6\\
Complex exponential & 7\\
Size-$N$ FFT, IFFT & $5N\log_2N$\\
\end{tabular}}
\caption{\label{fig:3} Flop counts for basic operations.}
\end{figure}

The flop count of each step in the BER technique, as explained in Sec. \ref{sec:tm}, is the following:

\begin{itemize}
\item Computation of ${\{\Fphi(n):\,n\in J\}}$ using (\ref{eq:260}), 
\be{eq:399}\nonumber
10P(N-P)-11P+3.
\ee
\item DFT of the sequence $\{\phi(n):\, n\in I_N\}$ for obtaining the coefficients ${\{\phi_p:\, p\in I_{N-P}\}}$ in (\ref{eq:389}): $5 N \log_2N$.

\item DFT of sequence ${\{\Fs_J(n): \, n\in I_N\}}$, for computing $\{S_{J,p}:\, p\in I_N\}$: $5 N \log_2N$.

\item Computation of recursive formula in (\ref{eq:389}),
\be{eq:401}\nonumber
8 P(N-P)-P.
\ee
\item Inverse DFT for obtaining the final result $\{\Fs_{J^c}(n):\,  n\in I_N\}$: $5 N \log_2N$.

\end{itemize}

The total cost of the BER technique is the following
\be{eq:403}
18P(N-P)-12P+3+15 N\log_2N.
\ee
The implementation of the proposed method has the following flop counts:

\begin{itemize}
\item Computation of $\{\Falp(n):\, n\in I_N\}$ in (\ref{eq:319}). We assume zero cost for this operation, given that it can be performed offline.

\item Computation of $\{\Fbet(n):\, n\in I_N\}$ in (\ref{eq:320}). This operation involves two FFTs plus $N$ complex multiplications. The cost is
\be{eq:393}\nonumber 
10 N\log_2N + 8N-1.
\ee

\item Computation of samples $\phi(n)$, $n\in J$, in (\ref{eq:329}) using (\ref{eq:325}). We assume the factor $-j2\pi P/N$ in the exponent of (\ref{eq:325}) has been pre-computed. The cost of this operation is 
\be{eq:394}\nonumber
18 P-3.
\ee
\item Computation of second factor in (\ref{eq:328}). This operation involves two DFTs and $N$ real-to-complex products with total cost
\be{eq:395}\nonumber
10 N \log_2N+4N-2.
\ee
Computation of $\{1/\phi'(n):\, n\in J^c\}$ from (\ref{eq:326}), and product with the output of the previous step. The cost is 
\be{eq:396}\nonumber  
20(N-P)-4.
\ee
\end{itemize}

The total cost of the proposed technique is the following 
\be{eq:397}\nonumber
20 \log_2N + 32N -2P-10.
\ee
By comparing (\ref{eq:403}) with this last equation, we can readily see the complexity of the proposed method is free of quadratic terms, while the complexity of the BER techniques is dominated by these terms when $P$ is separated from $0$ and $N$. 

\section{Numerical examples}
\label{sec:ne}

\subsection{Numerical stability}
\label{sec:ns}

The linear system in (\ref{eq:380}) is ill conditioned if there are long sequences of missing samples. This facts makes conventional inversion methods like Gaussian elimination unstable numerically for (\ref{eq:380}). So, in order to validate the method proposed in this paper, it is necessary to evaluate the accumulation of round-off errors. For this, we compare the following three methods in the sequel using double precision arithmetic:

\begin{itemize}
\item \emph{Pseudo-inverse method:} Based on solving (\ref{eq:380}) for unknowns $S_p$ using the pseudo-inverse, and then computing ${\{\Fs(n): \,n\in J^c\}}$, using (\ref{eq:292}). 

\item \emph{BER technique:} Combination of (\ref{eq:386}), (\ref{eq:389}), and (\ref{eq:390}).
\item \emph{Proposed method:} Method in theorem \ref{th:2} using the computation procedure for $\Fphi(n)$ and $\Fphi'(n)$ in theorem \ref{th:1}. 

\end{itemize}

In the examples that follow, we employ test signals of the form in (\ref{eq:292}) with
$S_p=S_{R,p}+jS_{I,p}$, where $S_{R,p}$ and $S_{I,p}$ are independent realizations of a
uniform distribution in the interval $[-1,1]$. The figures are based on 100 Monte Carlo
trials.

We present two examples. In the first, we assume a sampling grid which is the result of
shifting the samples of a uniform grid (jittered sampling). In the second, we address the
extrapolation problem, i.e, the sampling grid has a long gap that must be filled.

\subsubsection{Round-off error for a jittered sampling scheme}

In this example, we fix an oversampling factor $a=8$ and relate $N$ and $P$ through
$N=aP$. Then, we take sampling instants $t_p=8 p+u_p$, $p=0,\, 1,\ldots, P-1$ where $u_p$
is randomly taken from the set $\{0,1,\ldots \,, a-1\}$ with uniform distribution
(jittered sampling). 
\begin{figure}
\centerline{\includegraphics{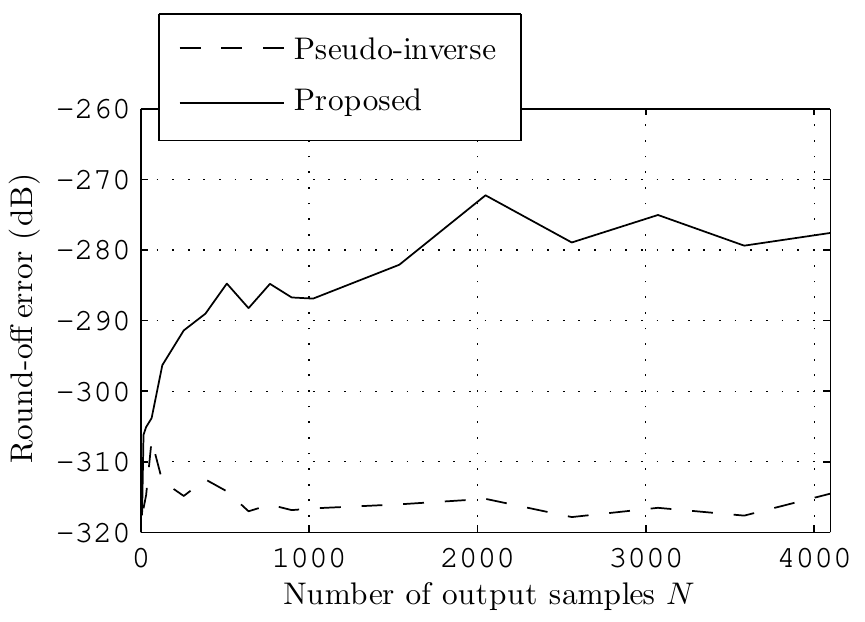}}
\caption{\label{fig:1} Round-off error versus number of output samples ($N$) for the
  proposed and pseudo-inverse methods.}
\end{figure}
Fig. \ref{fig:1} shows the round-off error versus the number of output samples $N$. The ordinate in this figure is the largest error among the $N-P$ interpolated samples. The proposed method improves on the BER technique slightly, and these two methods show a slight accuracy loss (one to two decimal digits) relative to the pseudo-inverse solution.  The error of the proposed method is sufficiently small for most applications.

\subsubsection{Round-off error for extrapolation}

In this example, we fix $N=64$ and take as input samples those with instants $0,\,
1,\ldots,\, P-1$. The objective is to interpolate the signal at instants $P,\, P+1,\ldots,
N-1$. Fig. \ref{fig:2}
\begin{figure}
\centerline{\includegraphics{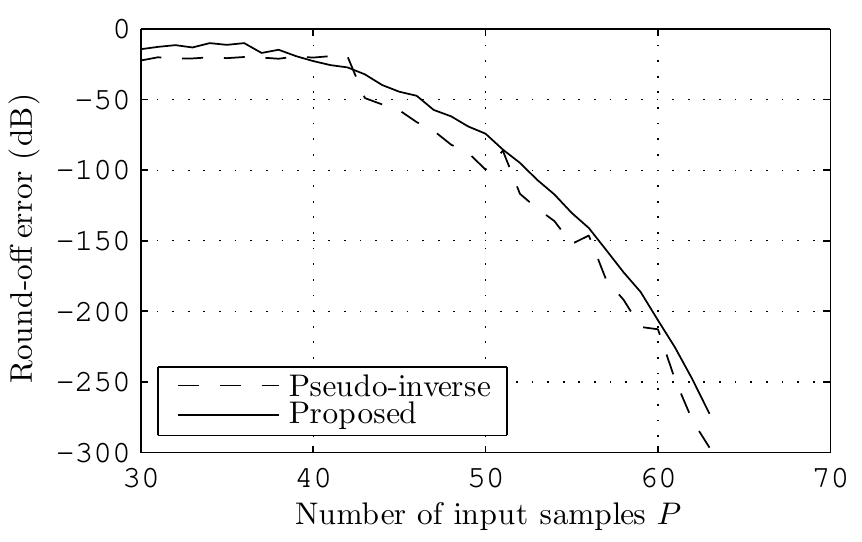}}
\caption{\label{fig:2} Maximum round-off error versus number of input samples $P$.}
\end{figure}
shows the maximum round-off error versus $P$. Note that there is little difference between the performances of the three methods, with the BER technique having a slightly better performance. 

\subsection{Computational burden}
\label{sec:cb}

In this section, we evaluate the computational burden of the proposed method relative to the BER technique, using the results in Sec. \ref{sec:ca}.

\subsubsection{Complexity versus grid size $N$}
\label{sec:cvg}

Fig. \ref{fig:3} shows the flop counts for the proposed and BER methods versus the grid size, assuming $a=N/P=8$. There are two variants of the proposed method in this figure. In variant ``Prop. A'', $\Fbet(n)$ in (\ref{eq:320}) is computed using the FFT (\ref{eq:337}), while in variant ``Prop. B'' (\ref{eq:320}) is evaluated directly. The proposed methods shows a clear improvement relative to the BER technique. For large $N$, ``Prop. A'' is roughly a factor 16 less complex than the BER technique. Also, note that ``Prop. B'' improves on ``Prop. A'' for small $N$. This is due to the fact that the cyclic convolution in (\ref{eq:320}) can be directly evaluated without any multiplications. 

\begin{figure}
\centerline{\includegraphics{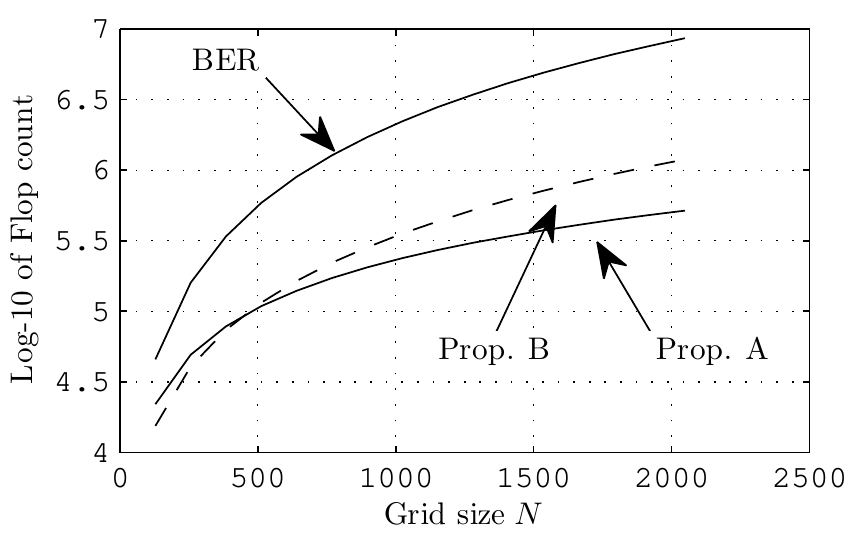}}
\caption{\label{fig:3} Complexity versus grid size $N$ for two variants of the proposed method and the BER technique. Variant ``Prop. A'' computes $\Fbet(n)$ in (\ref{eq:320}) using the FFT, while variant ``Prop. B'' performs this last computation by directly evaluating the convolution in (\ref{eq:320}).}
\end{figure}

\subsubsection{Complexity versus $N/P$ ratio}
\label{sec:cvn}

Fig. \ref{fig:5} shows the ratio 
\be{eq:406}\nonumber
\frac{\text{BER tech. flop count}}{\text{``Prop. A'' flop count}},
\ee
versus the factor $a=N/P$ for $N=1024$, where ``Prop A'' was described in the previous sub-section. This figure shows that ``Prop. A'' improves on the BER technique for all $a$ values except for the very small or very large. Actually, the BER technique is more efficient only if $a<0.0049$ or $a>0.96$, ($P\leq 5$ or $P\geq 1019$). The maximum improvement is factor 20 roughly.
\begin{figure}
\centerline{\includegraphics{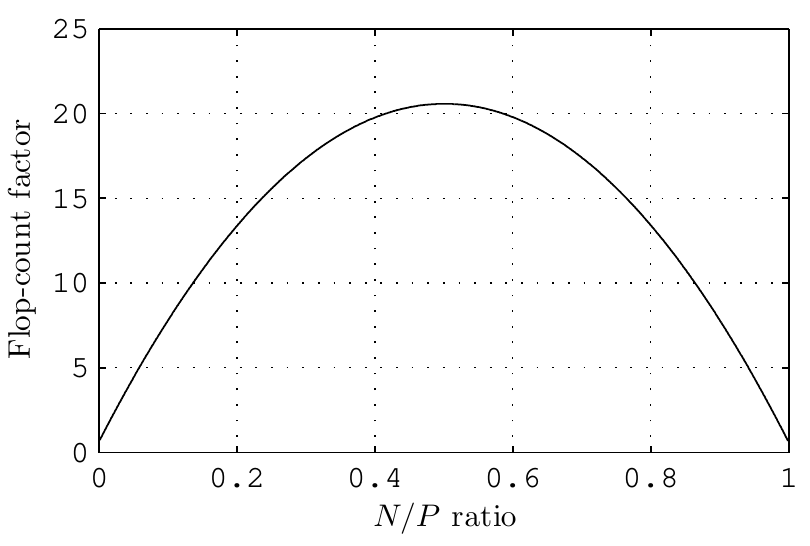}}
\caption{\label{fig:5} Improvement factor of the proposed method relative to the BER technique in terms of flop count, (proposed method's count / BER technique's count).}
\end{figure}

\subsubsection{Complexity compared with the zero-padding FFT algorithm }
\label{sec:ccz}

If $N/P$ is an integer and $J$ is a regular grid with spacing $N/P$, then the missing samples problem can be solved using the zero-padding FFT (ZP-FFT) algorithm, \cite[Sec. 3.11]{Lyons01}. Fig. \ref{fig:4} shows the complexity of this well-known algorithm and that of the method in this paper. The curve ``Proposed, no weight comp.'' is the count of ``Prop. A'' but discounting the complexity of computing $\{\Fphi(n):\, n\in J\}$ and $\{\Fphi'(n):\, n\in J^c\}$, given that the sampling grid is constant. This figure shows that the proposed method is, in rough terms, only factor two more complex than ZP-FFT if the weight factors are available, and factor 4 if not. 
\begin{figure}
\centerline{\includegraphics{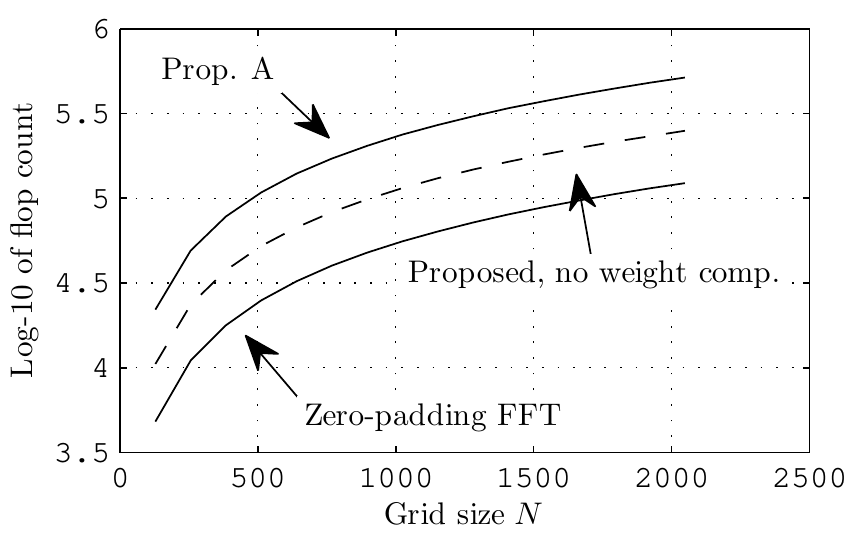}}
\caption{\label{fig:4} Complexity versus the zero-padding FFT algorithm.}
\end{figure}

\section{Conclusions}

We have presented a solution for the missing samples problem based on the FFT. The method has complexity $\FO(N\log N)$ and consists of four FFTs plus several operations of order $\FO(N)$. It provides a significant improvement on the burst error recovery (BER) technique, which seems to be the most efficient method in the literature. For typical values of $N$, the complexity is reduced up to factor 18, relative to this last technique. The method has been assessed in terms of numerical stability and computational burden numerically.

\appendices

\section{Proof of (\ref{eq:307})}
\label{ap:1}

In order to prove (\ref{eq:307}), write the summation as the logarithm of a polynomial in $z=\Fe^{j2\pi t/N}$:
\begin{align}
\label{eq:308}
\sum_{n=0}^{N-1}\Falp(n)=\log \Big(\prod_{n=1}^{N-1}(z-\Fe^{-j2\pi n/N})\Big) \Big|_{z=1}.
\end{align}
Note that $z^N-1$ is the monic polynomial with roots $\Fe^{j2\pi n/N}$, $n=0, 1,\ldots, N-1$, and these roots also appear in (\ref{eq:308}), except for the root at $z=1$. So we have that the polynomial in (\ref{eq:308}) is actually $(z^N-1)/(z-1)$ and
\begin{align}
\sum_{n=0}^{N-1}\Falp(n)&=\log \Big(\frac{z^N-1}{z-1}\Big)\Big|_{z=1} \nonumber\\
&=\log \Big(\sum_{n=0}^{N-1}z^n\Big)\Big|_{z=1}=\log(N).\nonumber
\end{align}

\bibliographystyle{IEEEbib}

\bibliography{../../../Utilities/LaTeX/Bibliography}
\end{document}